\newcommand{\fakepar}[1]{\medskip\par\textit{#1}\ }
\newcommand{\AR}{R\xspace}
\newcommand{\ABern}{B\xspace}
\newcommand{\Exp}[1]{{\rm E}[ #1 ]}
\newcommand{\Oh}[1]{\mathcal{O}\!\left( #1\right)}
\newcommand{\ceil}[1]{\left\lceil #1\right\rceil}
\newcommand{\floor}[1]{\left\lfloor #1\right\rfloor}
\newcommand{\prob}[1]{\text{Pr}\left[ #1 \right]}
\newcommand{\punkt}{\enspace .}
\newcommand{\setGilt}[2]{\left\{ #1: #2\right\}}
\newcommand{\set}[1]{\left\{ #1\right\}}
\newenvironment{code}{\noindent\it%
\begin{tabbing}%
\hspace{1.5em}\=\hspace{1.5em}\=\hspace{1.5em}\=\hspace{1.5em}\=\hspace{1.5em}\=%
\hspace{1.5em}\=\hspace{1.5em}\=\hspace{1.5em}\=\hspace{1.5em}\=\hspace{1.5em}\=%
\kill}{\end{tabbing}}
\newcommand{\Else}     {{\bf else\ }}
\newcommand{\Function} {{\bf Function\ }}
\newcommand{\Id}[1]{\ensuremath{\mathit{#1}}}
\newcommand{\If}       {{\bf if\ }}
\newcommand{\Is}{\mbox{\rm := }}
\newcommand{\RRem}[1]   {\`{\bf //\hspace{0.5mm}~}{\rm#1}}
\newcommand{\Return}   {{\bf return\ }}
\newcommand{\Then}     {{\bf then\ }}
\title{Efficient Parallel Random Sampling -- Vectorized, Cache-Efficient, and Online}
\titlerunning{Efficient Parallel Random Sampling -- Vectorized, Cache-Efficient, and Online}
\author{Peter Sanders}{Karlsruhe Institute of Technology}{sanders@kit.edu}{}{}
\author{Sebastian Lamm}{Karlsruhe Institute of Technology}{lamm@kit.edu}{}{}
\author{Lorenz Hübschle-Schneider}{Karlsruhe Institute of Technology}{huebschle@kit.edu}{}{}
\author{Emanuel Schrade}{Karlsruhe Institute of Technology}{schrade@kit.edu}{}{}
\author{Carsten Dachsbacher}{Karlsruhe Institute of Technology}{dachsbacher@kit.edu}{}{}
\authorrunning{Sanders, Lamm, Hübschle-Schneider, Schrade, Dachsbacher}
\keywords{Random sampling, hypergeometric random deviates, parallel algorithms, communication efficient algorithms}
\begin{document}

\maketitle

\begin{abstract}
We consider the problem of sampling $n$ numbers from the range $\{1,\ldots,N\}$
without replacement on modern architectures.  The main result is a
simple divide-and-conquer scheme that makes sequential algorithms more
cache efficient and leads to a parallel algorithm running in expected
time $\Oh{n/p+\log p}$ on $p$ processors, i.e., scales to massively parallel machines even for moderate values of $n$.
The amount of communication between the processors is
very small (at most $\Oh{\log p}$) and independent of the sample size.  We
also discuss modifications needed for load balancing,
online sampling, sampling with replacement,
Bernoulli sampling, and vectorization on SIMD units or GPUs.
\end{abstract}

\section{Introduction}

Random sampling is a fundamental ingredient in many algorithms, e.g.
for data analysis.  With the advent of ever larger data sets (``Big
Data''), the number of elements sampled from and even the sample
itself can become huge. Often the subsequent processing of the sample is
comparatively fast, and thus taking the sample can become a performance
bottleneck. Moreover, the speed of a single processor is stagnating so
that parallel algorithms are required for efficient sampling.
Furthermore, we can observe that only \emph{local} processing yields fast parallel algorithms and promises to scale linearly with the number of processors $p$.
Processor coordination over global memory or even
communication over the network quickly becomes a bottleneck~\citep{SSM13}. This is
particularly true for big data problems which often run on cloud
resources with limited communication capabilities, or for high
performance computing where the largest configurations are limited by
the bisection bandwidth of the network.

In this paper we focus on
the classical problem of sampling $n$ numbers out of the range $1..N$ without replacement.%
\footnote{We use the notation $a..b$ as shorthand for
  $\set{a,\ldots,b}$.}  In Section~\ref{s:preliminaries} we discuss
building blocks and previous approaches. Section~\ref{s:algorithm}
introduces our divide-and-conquer algorithm for sampling without
replacement. We discuss a number of generalizations in
Section~\ref{s:generalizations} including online sampling in sorted
order, load balancing, uneven distribution of the sampled universe,
using true randomness, achieving deterministic
results, sampling with replacement, Bernoulli sampling, and
vectorization. After providing details of our implementation in
Section~\ref{s:implementation}, Section~\ref{s:experiments} describes
experiments which demonstrate the speed and scalability of our
algorithm on both CPUs and GPUs. Section~\ref{s:conclusion} summarizes
the results and discusses some applications.

\section{Preliminaries and Related Work}\label{s:preliminaries}
\begin{table}[b]
  \centering
  \caption{Overview of abbreviations for the algorithms.\label{tab:algorithms}}
  \begin{tabular}{lllll}\toprule
    Abbrv. & Source                      & Time $\Oh{\cdot}$ & Space $\Oh{\cdot}$ & Mnemonic aid\\\midrule
    S      & \cite{Fan62}                & $N$               & $1$ & scan\\
    D      & \cite{Vit84}                & $n$               & $1$ & distance\\
    \ABern & \cite{AhrDie85}             & $n$               & $n$ & Bernoulli\\
    H      & \cite{Knu98}                & $n$               & $n$ & hash\\
    R      & Section~\ref{ss:sequential} & $n$               & $\log n$ & recursive\\
    P      & Section~\ref{ss:parallel}   & $n/p+\log p$      & $\log n$ & parallel\\
    SR     & Section~\ref{ss:sorted}     & $n$               & $1$ & sorted recursive\\\bottomrule
  \end{tabular}
\end{table}

Our goal is to efficiently take a sample of size $n$ from the range
$1..N$ using $p$ processors.  This algorithm can also be used to
sample from an array of $N$ elements.  More generally, the result
applies to sampling from a set $M$ of elements if $N=|M|$ is known and
if random access to the elements is possible.  Besides arrays, this
assumptions is true in many other situations, e.g., for files of equal
sized objects, or for tables in many database systems. In particular
this applies to column-oriented systems where random access is a basic
mechanism needed to reconstruct result rows after applying filters to
a small number of columns.  Many other database systems provide
random access. For example, the NoSQL system MongoDB by default
creates an index on the unique object-ids.

To avoid special case discussions, we will
henceforth assume that $n\leq N/2$.  For the unusual case $n>N/2$, one
can simply generate the $N-n<N/2$ elements that are \emph{not} in the
sample.
When considering parallel algorithms, we use $p$ to denote the number of
processing elements (PEs), which we assume to be connected by a network.  PEs
are numbered $1..p$.

\paragraph*{Algorithm S} \cite{Fan62} and \cite{Knu81} scan the range $1..N$ and generate a
uniformly distributed random deviate for each element to decide whether it is sampled.  For
$N\gg n$ this is prohibitively slow and we are surprised that the
algorithm still seems to be widely used, for example by the GNU
Scientific Library, GSL (function \texttt{gsl\_ran\_choose},
\url{https://www.gnu.org/software/gsl/}, version 2.2.1).

\paragraph*{Algorithm H}
Algorithm H is a simple and efficient folklore algorithm that is very good
for small~$n$ (see also \cite{AhrDie85}). The sample is kept in a hash table $T$ which is initially
empty. To produce the next sample element, it generates uniform deviates $X$ from
$1..N$. If $X\in T$, it rejects $X$, otherwise $X$ is inserted into $T$.
This algorithm runs in expected time
$\Oh{n}$. Note that $T$ contains random numbers and hence we can use a
very simple hash function, such as extracting the most significant $\log
n+\Oh{1}$ bits from the key.%
\footnote{In this paper $\log x$ stands for $\log_2 x$.}  For $n\ll
N$ the number of random deviates required is close to $n$, and we only
need uniform deviates. This makes Algorithm~H very fast for small
$n$.  For large~$n$, however, most hash table accesses cause cache faults,
slowing it down considerably.

Algorithm~H could be parallelized. However, the hash table accesses
then become global interactions between the PEs. The resulting
overheads are even larger than the cache faults in the sequential
algorithm and cause a severe bottleneck in distributed settings.  One
also has to be very careful if the resulting algorithm is supposed to
be \emph{deterministic}, i.e., the generated sample should be the same
in repeated runs with the same seeds for the random number generators:
race conditions in remote memory accesses or message delivery can
easily lead to differences in the generated sample.

\paragraph*{Algorithm~D}
\cite{Vit84} proposed an elegant sequential algorithm that
generates the samples in sorted order without any need for auxiliary
data structures. For generating the next sample, Algorithm~D
essentially generates an appropriate random deviate that specifies the
number of positions to skip.
Note that the random deviate changes in each step; using sophisticated
techniques based on the rejection method, generating these random deviates can be done in constant
expected time.

\paragraph*{Algorithm~\ABern}
\cite{AhrDie85} use the observation that taking a
Bernoulli sample where each element of $1..N$ is sampled with
probability $\rho\approx n/N$ yields a sample with $n'\approx n$
elements. If this sample is too big it can be repaired by removing
$n'-n$ of the elements randomly. By choosing $\rho$ somewhat larger
than $n/N$, one can make the case $n'<n$ highly unlikely and simply
restart the sampling process if it does occur.  Bernoulli sampling
can be implemented efficiently by generating geometrically distributed
random deviates to determine how many elements to skip in each
step. Algorithm~\ABern is faster than Algorithm~D because generating
geometric random deviates needs less arithmetic operations per element. In
Section~\ref{ss:Bernoulli} we point out that it may be even more
important that the parameters of the generated distribution remain the
same, as this makes vectorization possible. A notable difference
of Algorithm~\ABern to the aforementioned approaches is
that elements are not generated online, i.e., we have an initial delay of
$\Theta(n)$ before the first sample is generated.

\cite{meng2013scalable} gives a simple parallel sampling
algorithm that works well when the data is stored on disk.  However,
it only works with high probability, needs to compute a random
deviate for every input element, and needs communication between the
PEs in order to perform a parallel sorting step of elements that may
or may not be in the sample.

Table~\ref{tab:algorithms} summarizes the algorithm abbreviations used in this paper.

\section{Divide-and-Conquer Sampling}\label{s:algorithm}

Our central observation is that for any splitting position $\ell$, the
number of samples $L$ from the left range $1..\ell$ is distributed
hypergeometrically with parameters $n$ (number of experiments), $\ell$
(number of success states) and $N$ (universe size). Consequently, the number of samples from the right part
$\ell+1..N$ is $n-L$. We now apply this idea first to a sequential sampling algorithm in Section~\ref{ss:sequential} and then give a parallelization  in Section~\ref{ss:parallel}.

\subsection{Sequential Divide-and-Conquer Sampling (Algorithm~R)}\label{ss:sequential}

Algorithm~R in Figure~\ref{fig:algorithmR} gives pseudocode for a
sequential recursive divide-and-conquer algorithm based on the
splitting approach described above.  The range $1..N$ is split at $\ell=\floor{N/2}$.
Then $A$ is assigned to a recursively constructed sample of size $L$
from $1..\ell$.  $B$ becomes a sample of size $n-L$ from $\ell+1..N$.
After adding $\ell$ to the samples of $B$, we overall obtain the
desired sample of size $n$ from $1..N$.

The tuning parameter $n_0$ decides when to switch to the
base case. When using Algorithm~H, $n_0$ should be small enough so that
the hash table fits into cache. Note that the resulting recursion tree
has a size of at most $2n/n_0$. Hence the overall expected running time is
$\Oh{n}$, provided that we use a constant time algorithm for generating
hypergeometric random deviates (e.g. \cite{Stad90hyp}) and a linear
expected time algorithm for the base case.

\begin{figure}[h]
\begin{code}
\Function \Id{sampleR}$(n,N)$\+\\
  \If $n<n_0$ \Then \Return sampleBase$(n,N)$\RRem{e.g. using algorithms H or D}\\
  $L\Is \Id{hyperGeometricDeviate}(n,\floor{N/2}, N)$\\
  $A\Is \Id{sampleR}(L,\floor{N/2})$\\
  $B\Is \Id{sampleR}(n-L,N-\floor{N/2})$\\
  \Return $A\cup\setGilt{x+\floor{N/2}}{x\in B}$
\end{code}
\caption{\label{fig:algorithmR}Algorithm~\AR for (sequential) divide-and-conquer
  sampling without replacement.}
\end{figure}

\subsection{Parallel Divide-and-Conquer Sampling (Algorithm~P)}\label{ss:parallel}

We now describe a parallel sampling algorithm that needs almost no
communication. More precisely, each PE needs to know
$N$ and $n$ but otherwise, no communication is necessary.  If we can
assume that these values are known from the context, no communication
is needed at all. When these parameters are initially only known
at PE $1$, they can be broadcast with latency $\Oh{\log p}$.  In
subsequent refinements, slightly larger amounts of communication are
needed, but we will see that total communication overhead remains limited
to $\Oh{\log p}$.

For parallel sampling, we partition the range $1..N$ into $p$ pieces.
Let $N_i$ denote the last element in the range associated with PE
$i$, i.e., PE $i$ generates the sample elements that lie in the
subrange $N_{i-1}\!+\!1\,..\,N_i$ with $N_0\Is 0$. The underlying idea of the
parallelization is to adapt Algorithm~\AR in a way such that $\ceil{\log p}$
levels of recursion split the original range $1..N$ into the subranges
of each PE.
Initially, all PEs work on the full range $1..N$.
We use the PE numbers to gradually break this symmetry.
After splitting the subproblem, each PE will follow only a single recursive
call -- the one whose range contains its local subrange.

\begin{figure}
\begin{code}
\Function \Id{sampleP}$(n',j..k,i,h)$\+\\
  \If $k-j=1$ \Then\+\\
    use $h(i)$ to seed the local pseudorandom number generator\\
    $M\Is sampleLocally(n',N_i-N_{i-1}+1)$\RRem{e.g. using algorithms H, D, or \AR}\\
    \Return $\setGilt{N_{i-1}+x}{x\in M}$\-\\
  $m\Is\lfloor\frac{j+k}{2}\rfloor$\RRem{middle PE number}\\
  $L\Is \Id{hyperGeometricDeviate}(n',N_m-N_j+1, N_k-N_j+1,j..k,h)$\\
  \If $i\leq m$ \Then \Return $\Id{sampleP}(L,j..m,i,h)$\\
  \Else \Return $\Id{sampleP}(n'-L,m+1..k,i,h)$
\end{code}
\caption{\label{fig:algorithmP}Algorithm~P for sampling $n'$ elements
  on PEs $j..k$ where $i\in j..k$ is the PE executing the function.
  The initial call  on PE $i$ is
  $\Id{sampleP}(n,1..p,i,h)$.}
\end{figure}

Locally, when each PE works on its original range, we can use any sequential
algorithm, but we have to be careful: on the one hand, PEs following
the same path in this recursion tree have to generate the \emph{same}
random deviates to get a consistent result. On the other hand, random
deviates generated in two \emph{different} subtrees have to be
independent.  With true randomness (e.g. generated using a hardware
random number generator \citep{Intel12}) this would require
communication to distribute the right random values to the
PEs (see also Section~\ref{ss:true}). However, using pseudorandomness (as most applications do)
allows us to achieve the desired effect without any communication.
The idea is to use a (high quality) hash function $h$
as source of pseudorandomness for generating hypergeometric deviates.
In the subproblem for PEs $j..k$, the $t$-th random deviate is the hash of the triple $(j,k,t)$, i.e.
$h((j,k,t))$.  Figure~\ref{fig:algorithmP} gives pseudocode where the
function \Id{hyperGeometricDeviate} is passed both $h$ and $j..k$ in
order to be able to use this technique.  Within
function \Id{sampleLocally}, we can still use an ordinary generator of
pseudorandomness which may have a better trade-off between speed and
quality than hashing. In order to break the symmetry between the
PEs, we can seed it with~$h(i)$ on PE $i$.

Another issue is that the PEs need access to the global element
indices $N_j$. If the universe is evenly distributed between
PEs (except for the last one if $p$ does not divide $n$) this
is easy, as we simply have $N_j=j\ceil{n/p}$ for $j<p$ and $N_p=N$. Refer
to Section~\ref{sss:uneven} for the case of uneven
distribution of the universe.

\medskip\par\noindent
We obtain the following running time for Algorithm~P.
\begin{theorem}\label{thm:time}
If $\max_i\left(N_i-N_{i-1}\right)=\Oh{N/p}$ then Algorithm~P runs in time $\Oh{n/p+\log p}$
with high probability.%
\footnote{We say a statement is true with high probability if it is
  true with probability at least $1-p^{-c}$ for any constant $c$.}
\end{theorem}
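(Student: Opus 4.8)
The plan is to separate the running time into two contributions: the $\Oh{\log p}$ cost of descending the recursion tree of depth $\ceil{\log p}$, and the $\Oh{n/p}$ cost of the local base-case computation on each PE. The first part is essentially deterministic once we observe that each PE follows a single root-to-leaf path: at each of the $\ceil{\log p}$ levels, the PE evaluates one hypergeometric deviate via $h$ (constant expected time, using the assumed constant-time generator from \cite{Stad90hyp}) and recurses into exactly one half. So the descent contributes $\Oh{\log p}$ with certainty up to the cost-per-deviate. The real work is to bound the local sample size $n_i \Def n'$ that PE $i$ receives when it reaches its leaf, and then to bound the time to produce those $n_i$ samples locally.

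The key step is a concentration argument on $n_i$. At each level of the descent, $n'$ is replaced by a hypergeometric deviate (or by $n'$ minus one). Composing these across the $\ceil{\log p}$ levels, the final $n_i$ is exactly the number of the $n$ global samples that land in PE $i$'s subrange $N_{i-1}+1..N_i$, which is itself hypergeometrically distributed with expectation
\[
\Exp{n_i} = n\cdot\frac{N_i-N_{i-1}}{N} = \Oh{n/p}
\]
by the hypothesis $\max_i(N_i-N_{i-1}) = \Oh{N/p}$. First I would invoke a Chernoff–Hoeffding-type tail bound for the hypergeometric distribution (which is at least as concentrated as the corresponding binomial, so the standard binomial bounds apply) to show that $n_i = \Oh{n/p + \log p}$ with probability $1-p^{-c-1}$ for any desired constant $c$. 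Since we need this simultaneously for all $p$ processors, a union bound over the $p$ leaves then gives $\max_i n_i = \Oh{n/p+\log p}$ with probability $1-p^{-c}$, as required. The additive $\log p$ term in the per-PE bound is exactly what absorbs the case $n/p = \oh{\log p}$, where the expectation is small but fluctuations of order $\log p$ can still occur.

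It remains to convert the bound on $n_i$ into a running-time bound for \Id{sampleLocally}. Assuming, as stated in the discussion of Algorithm~R, that the base case runs in expected time linear in the number of samples drawn, the local work is $\Oh{n_i} = \Oh{n/p+\log p}$; I would combine this with the high-probability bound on $n_i$ just established. The main obstacle I anticipate is handling the interaction between the randomness in $n_i$ and the randomness internal to \Id{sampleLocally}: one must argue that, conditioned on the high-probability event $\{n_i = \Oh{n/p+\log p}\}$ holding for every PE, the total (including local) time is $\Oh{n/p+\log p}$ with high probability, rather than merely in expectation. This is cleanest if the local base case has a high-probability linear time bound as well; otherwise one carries an expected-time base case through and argues that the rare bad events contribute negligibly. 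Adding the $\Oh{\log p}$ descent cost to the $\Oh{n/p+\log p}$ local cost yields the claimed $\Oh{n/p+\log p}$ total, completing the proof.
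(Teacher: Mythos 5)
Your decomposition and your central concentration step coincide with the paper's own proof: the paper likewise observes that the number of samples $n_i$ landing on PE $i$ is hypergeometrically distributed, reduces to the binomial (balls-into-bins) case via a stochastic-domination result for the hypergeometric distribution, and applies Chernoff bounds to obtain $\Oh{n/p}$ samples per PE when $n=\Omega(p\log p)$ and $\Oh{\log p}$ samples when $n=\Oh{p\log p}$ --- exactly the role your additive $\log p$ term plays. So the heart of your argument is the same and is sound.

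There is, however, a genuine gap in how you convert sample \emph{counts} into running \emph{time}. You charge ``constant expected time'' per hypergeometric deviate and conclude that the descent costs $\Oh{\log p}$ ``with certainty up to the cost-per-deviate'' --- but that qualifier is precisely the unresolved issue: a sum of $\ceil{\log p}$ generation times, each with constant expectation, is not $\Oh{\log p}$ with high probability without tail information (Markov's inequality at failure probability $p^{-c}$ would only give a bound larger by a factor $p^{c}$). The same problem recurs for $\Id{sampleLocally}$; you flag it, but your fallback --- carrying an expected-time base case and arguing that ``rare bad events contribute negligibly'' --- does not follow from an expected-time hypothesis alone. The paper closes exactly this gap: generators for hypergeometric deviates based on the rejection method have running times bounded by a constant times a geometrically distributed random variable, and a sum of $\Oh{\log p}$ (or $\Oh{n/p+\log p}$) independent such variables is $\Oh{\log p}$ (resp.\ $\Oh{n/p+\log p}$) with high probability. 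This single observation handles the deviates along the descent, Algorithm~D and Algorithm~\AR{} as local samplers (each sample's cost is again geometrically bounded), while for Algorithm~H one additionally allocates a hash table of size $\Oh{n/p+\log p}$ to get the required short tails. Adding this tail argument for the generation times would complete your proof.
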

\begin{proof}
First note that the bound holds when we only calculate with expectations.
Each PE generates $\leq\ceil{\log p}$ hypergeometric random deviates and $\Oh{n/p}$
samples in expectation.

However, we also have to take into account rare cases that could slow down
computation on some PE which would then lead to a
large overall execution time. Three issues have to be considered:
deviations in the number of samples per PE, deviations in the
time needed to generate the random deviates, and running time fluctuations within
function \Id{sampleLocally}.

The number of samples generated by one PE has a hypergeometric
distribution.  We exploit that this distribution spreads the elements
more evenly than a binomial distribution \citep[Theorem~3.3]{San96e} and analyze the
simpler situation when each sample is independently assigned to
PE $i$ with probability $(N_i-N_{i-1}+1)/N=\Oh{1/p}$. We thus
have a classical balls-into-bin situation that can be analyzed using
Chernoff bounds.
The particular calculations needed here have been done in \cite[Theorem~3.7]{San96e}
and yield exactly what we need --
$\Oh{\log p}$ samples with high probability when $n=\Oh{p\log p}$ and
$\Oh{n/p}$ samples with high probability when $n=\Omega(p\log p)$.

Fast algorithms for generating hypergeometric deviates \citep{Stad90hyp} are
often based on a rejection method, i.e., they generate a constant number
of uniform deviates, perform a constant amount of computation, and
then perform a test that succeeds with constant probability. If the
test fails, an independent new trial is performed. Hence, the running
time of the generator can be bounded by a constant times a geometrically distributed
random variable. Since each PE generates $\ceil{\log p}$ hypergeometric deviates, we need
a tail bound on the sum of a logarithmic number of exponential deviates with constant expectation.
By Lemma~\ref{lem:sum} below this sum is $\Oh{\log p}$
with high probability. By choosing the parameter $c$ in the definition of ``with high probability''
by one larger than what we want to show in the overall theorem, we can accommodate the fact that
the running time on the slowest PE determines the overall running time -- the probability that
any of the $p$ PEs is slower than claimed is bounded by $p\cdot p^{-(1+c)}=p^{-c}$.

When using Algorithm~D for generating local samples, we can use a
similar argument as above~-- the running time for generating each sample
is bounded by a geometrically distributed random variable so that
large deviations from the expectation are unlikely. When using
Algorithm~H, the details of the analysis depend on the tails of the
running time distribution of the hash table, but we will get the
required short tails of the running time distribution if we allocate
enough space~-- $\Oh{n/p+\log p}$ -- for this table. When using Algorithm~\AR,
additional hypergeometric random deviates are generated, but the
argument with the geometrically distributed running time again holds.
\end{proof}
\begin{lemma}\label{lem:sum}
  Consider $k=\Omega( \log p)$ independent geometrically distributed random variables $X_i$ with constant expectation.
  Then their sum $Y$ is in $\Oh{k}$ with high probability.
\end{lemma}
\begin{proof}
  By considering the geometric random variable with largest
  expectation, we can assume wlog that the variables are identically
  distributed. The sum of $k$ independent, identically distributed
  random variables has a negative binomial distribution. This situation can
  be analyzed using Chernoff bounds by exploiting the relation between binomial
  distribution and negative binomial distribution.
  We obtain:
  $$\prob{Y>a\Exp{Y}}\leq e^{-\frac{ak(1-1/a)^2}{2}}
  \leq p^{-\Omega(1)\frac{a(1-1/a)^2}{2}}
  \leq p^{-c}\punkt$$
  The first ``$\leq$'' follows a derivation of \cite{brown6wasted}.
  The second ``$\leq$'' exploits that $k=\Omega( \log p)$, and
  the third inequality chooses a sufficiently large constant $a$ for any particular choice of the
  constant $c$ from the definition of ``with high probability''.
\end{proof}

\section{Generalizations}\label{s:generalizations}

\subsection{Generating Output in Sorted Order and Online}\label{ss:sorted}
Note that Algorithm~\AR can easily output the elements in sorted
order provided that the base case algorithm generates the samples in
sorted order. This is certainly the case when using Algorithm~D, and we
can also adapt Algorithm~H for this purpose. For example, we can maintain the invariant
that the samples in the hash table are sorted. This is possible since
we use the most significant bits to address the table.  We only have
to ensure that colliding elements are also sorted. Rather than
appending an inserted element $k$ to the end of a cluster of colliding
elements, we skip elements smaller than $k$ and then shift the cluster
elements larger than $k$ one position to the right.  This makes
handling clusters of colliding elements somewhat slower, but the overall
overhead is small since the clusters are small (expected constant size).

Alternatively, we can insert into the hash table normally -- ignoring the
ordering of the keys -- and sort the hash table afterwards. Since the
sorting order is the same as the hash function value, the only thing
we have to do is to scan the hash table and sort clusters of colliding
elements. This leads to a linear time algorithm since the clusters are small.

It is also easy to modify Algorithm~R to generate samples online with
constant expected delay between generated samples. We can modify the
divide-and-conquer step to split off a range of size
$\ceil{N\cdot n_0/n}$. Using this splitting in an iterative fashion, we
generate samples in batches of expected size of approximately $n_0$. This takes
time $\Oh{n_0}$ per batch, i.e., constant time if $n_0$ is a constant.

The same techniques can be used in a parallel setting. Then each
PE generates the elements of its designated subrange of $1..N$
in sorted order.

\subsection{Load Balancing}\label{sss:lb}
Algorithm~\AR implicitly assumes that all PEs are equally
fast. However, for various reasons, this may not be the case. For
example, we might work with heterogeneous cloud resources, there might
be other jobs (or operating system services) slowing down some
PEs, or uneven cooling might imply different clock frequencies
for different PEs. In these cases, the slowest PE would
slow down the overall computation. This problem can be solved with
standard load balancing techniques.  We split~$1..N$ into $p'\gg p$
jobs (subranges) and use a load balancing
algorithm to dynamically assign jobs to PEs.

The most widely used load balancing method for such problems uses a
centralized master PE to assign jobs to
PEs. Unfortunately, this increases the running time from
$\Oh{n/p+\log p}$ (Theorem~\ref{thm:time}) to $\Oh{n/p+p}$. A more
scalable approach is \emph{work stealing} \citep{FinMan87,BluLei99}.
To employ this approach, we
instantiate the concept of a \emph{tree shaped computation}
\citep{San02b}: We conceptually split the work into very fine grained
\emph{atomic} jobs corresponding to ranges of sample values that are
expected to contain a constant number of samples (say,
$n_0$). However, initially these jobs are coalesced into $p$ (meta)
jobs of about equal size.  Now each PE sequentially works on
its meta job, one atomic job at a time. Idle PEs ask random
other PEs to split their range of unfinished atomic jobs in
half, delivering one half to the idle PE. Note that both
splitting off the next atomic job and splitting the remaining range of
atomic jobs in half can be done in constant expected time using the
division strategy from Algorithm~P. The generic analysis of
\cite{San02b} then yields the same asymptotic running time as in
Theorem~\ref{thm:time}.

At least on shared memory machines, this asymptotically scalable load
balancing is easy to implement since it is part of widely used tools
such as the C++ standard library \citep{SSP07}, the Intel Thread
Building Blocks, or Cilk \citep{BluEtAl95}.

\subsection{Uneven Distribution of the Sampled Universe}\label{sss:uneven}
When we sample from a set of elements distributed over PEs
connected by a network, we may not want to load balance. Rather, we want
to use the \emph{owner computes} paradigm -- each PE computes
those samples that stem from its local subset of elements.%
\footnote{In a hybrid setting, where several shared memory machines
  are connected by a network, we could still apply load balancing on each
  shared memory machine.}  In this situation, each PE
$i$ initially only knows its local number of elements $L_i$.

\begin{figure}[bt]
  \centering
  \includegraphics[width=\textwidth]{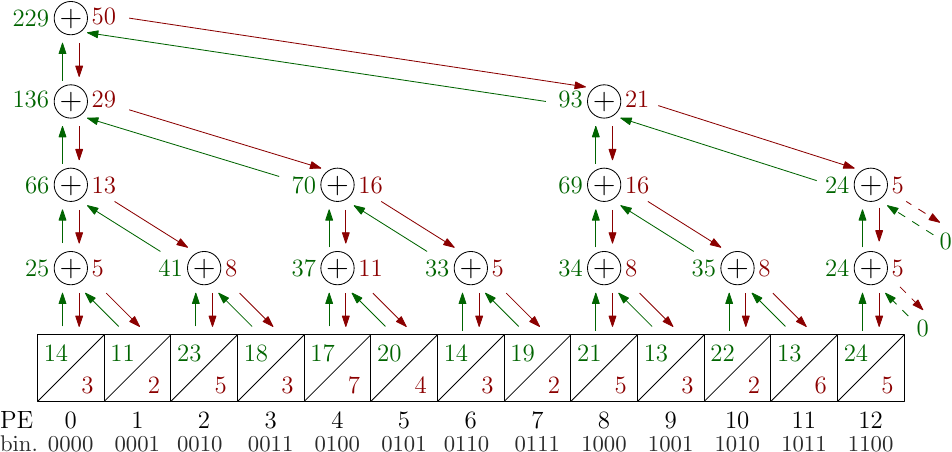}
  \caption{\label{fig:uneven}Assigning $n=50$ samples to $13$ PEs, with $N=229$
    elements.  Element counts ($L$-values, in green, on the nodes' left side)
    are added bottom-up, sample counts ($n'$, in red, right side) are assigned
    top-down.}
\end{figure}

We address this situation by arranging the PEs into a binomial
tree \citep{SulBas77}. Let the PEs be numbered $0..p-1$ now to facilitate binary arithmetic.  If
the binary representation of the PE number $i$ ($\ceil{\log p}$
bits) contains $k$ trailing zeroes, it is connected with PEs
$i+2^j$ for $j\in 0..k-1$ if $i+2^j<p$. The connections for each
value of $j$ form one level of a binary tree, as shown in the example in
Figure~\ref{fig:uneven}.
At level $j\in 0..\ceil{\log p}$ we get (maximal) subtrees spanning PEs
$2^ja..\min(2^ja+2^j-1,p-1)$ for $a=0..p/2^j-1$.
In an upward pass, iterating from $j=0$ upwards,
we compute the sum of the $L$-values in each of these subtrees.
For an inner node let $L_{\ell}$ and $L_r$ denote the partial sums for its left
and right subtrees, respectively.

Now the number of samples in each subtree is computed in a top down
fashion. The root knows that it has to generate $n'=n$ samples.  Other
nodes receive their $n'$ value from their parent.  An inner node uses a
hypergeometric distribution with parameters~$n$, $L_{\ell}$, and
$L_{\ell}+L_r$ to split its $n'$ samples into $n'=n_{\ell}+n_r$.
Then $n_{\ell}$ is used for the next smaller subtree locally, while
$n_r$ is passed to the right child as the number of samples to be
generated there.  To generate independent random values everywhere, the
subtree representing PEs $a..b$ can use this range as an input
for the hash function~$h$ from Algorithm~P.

\subsection{Using True Randomness}\label{ss:true}
Now let us assume that each PE has access to some independent
physical source of truly random values. In this case, we can use the
algorithm from Section~\ref{sss:uneven} since it makes every random
decision only once and explicitly passes the resulting information to
other PEs.

\subsection{Deterministic Results}
For fixed $p$ and $h$, Algorithm~P deterministically and reproducibly
generates the same sample every time, which is important to make
software using the algorithm predictable, reliable, and testable.  If
we even want the result to be independent of $p$, we can use the
load balancing method from Section~\ref{sss:lb}.  In this case, we generate $p'\gg p$ jobs
regardless of the actual number of PEs used and then assign the jobs
to the PEs (possibly even statically, $\ceil{p'/p}$ consecutive jobs for each PE).

\subsection{Sampling with Replacement in Various Spaces}\label{ss:replacement}
Algorithms~\AR and P are easy to adapt to sampling with replacement. The only thing
that changes is that the hypergeometric distribution for the
divide-and-conquer step has to be replaced by a binomial distribution.
Note that this is not restricted to sampling from the one-dimensional
discrete range $1..N$. We can also uniformly sample from continuous or
higher-dimensional sets as long as we can bipartition the space.  For
example, for generating random points in a rectangle we can
subsequently bisect this rectangle into smaller and smaller rectangles
up to some base case. In order to match the size of these base objects
to the number of PEs, it might be useful to generate~$K\gg p$
base objects and to use some kind of load balancing to map base
objects to PEs. This works similar to the load balancing
methods from Section~\ref{sss:lb}.

\subsection{Relation to Bernoulli Sampling}\label{ss:Bernoulli}

We want to point out that Bernoulli sampling and sampling without
replacement are almost equivalent in the sense that they can emulate
each other efficiently. On the one hand, Bernoulli sampling with success probability $\rho$ can be
implemented by sampling without replacement if we can first determine
how many elements $n$ are sampled by Bernoulli sampling.  This number
follows a binomial distribution with parameters $N$ and $\rho$. Then we
can use sampling without replacement to choose the actual elements.
On machines with slow floating point arithmetics,
e.g. microcontrollers, this approach might be faster than generating skip values
from a geometric distribution, which requires evaluating logarithms.

On the other hand,
Algorithm~\ABern \citep{AhrDie85}
generates $n$ samples without replacement by ``repairing'' a Bernoulli sample.
For this paper, it is important that Bernoulli sampling can also be parallelized
in  several ways.  We can independently apply Bernoulli sampling to
subranges of $1..N$. This is the method of choice for distributed
memory machines since it requires no communication. On a shared memory
machine, we can also generate an array of $(1+o(1))\rho N$ independent,
geometrically distributed random deviates and compute
their prefix sums. The values up to $N$ denote the sample.  A
practically important observation is that the operation needed for
this approach has no conditional branches or random memory accesses,
and hence can be implemented on SIMD (single instruction multiple
data) units of modern CPUs or on GPUs.

To parallelize Algorithm~\ABern, we can use
it as base case of Algorithm~P.  We can also use parallel Bernoulli sampling and then
use  Algorithm~P in the repair step.

\section{Implementation Details}\label{s:implementation}

We have implemented algorithms~D, H, R, P, and \ABern using C$++$.%
\footnote{\url{https://github.com/sebalamm/DistributedSampling} and
  \url{https://github.com/lorenzhs/sampling}}
Refer to Table~\ref{tab:algorithms} for a summary of the abbreviations.
We use Spooky
Hash\footnote{\url{http://www.burtleburtle.net/bob/hash/spooky.html}, version 2}
as a hash function which generates seeds for
initializing the Mersenne twister \citep{MatNis98} pseudorandom number generator for
uniform deviates.

\fakepar{Algorithm~D} has been translated literally from the description of \cite{Vit84}.

\fakepar{Algorithm~H} uses hashing with linear probing \citep{Knu98} using a
power of two as table size.  We use two variants for obtaining the
entries of the table and for emptying it.  The default is to record
the positions of inserted elements on a stack. This way, we can
retrieve and reset the table elements without having to consider empty
entries. In turn, this allows us to make the table size $m$
significantly larger than the final number of entries $n$ in order to
speed up table accesses. This does not work when we want to output
table entries in sorted order. Here we omit the stack and explicitly scan the
table at the end. Furthermore, we allocate $n$ additional table
entries to the right so that it becomes unnecessary to wrap around when
an insertion probes beyond the $m$-th table entry. Otherwise, wrapping around
could destroy the globally sorted order between clusters (see Section~\ref{ss:sorted}).

\fakepar{Algorithm~R} uses Algorithm~H as the base case sampler
(\emph{sampleBase} in the pseudocode of Figure~\ref{fig:algorithmR}).  We do this
because Algorithm~H is faster than Algorithm~D for small subproblems
where the hash table fits into cache.  This will always be the case if~$n_0$
is chosen appropriately (we use $n_0=2^{9}$ and $m=2^{12}$).
To generate hypergeometric random deviates, we use the \texttt{stocc}
library\footnote{\url{http://www.agner.org/random/}, version 2014-Jun-14},
which uses a Mersenne twister internally.

\fakepar{Algorithm~P} on Blue~Gene/Q is parallelized using MPICH 1.5 on gcc 4.9.3.
It uses Algorithm~\AR with parameters $n_0=2^8$ and $m=2^{11}$ as local sampling algorithm.

\fakepar{Algorithm~B} uses Algorithm~R for selecting samples to be removed in the repair
step.  Geometric random deviates are generated using the C$++$ standard library classes
\texttt{std::geometric\_distribution} and \texttt{std::mt19937\_64}.

We implemented two further variants of Algorithm~\ABern.  One targets
SIMD parallelism within a single CPU-core. The other uses NVIDIA GPUs.
The CPU-SIMD version performs best when restricting
arithmetics to 32 bits. Therefore we use a smaller maximal universe
size of $N=2^{30}$ there. This version uses the Intel Math Kernel
Library MKL v11.3 \citep{intel-mkl} to generate geometric deviates.  Prefix
sums are computed by a manually tuned routine using SSE2 instructions
through compiler intrinsics.

The GPU version uses CUDA 7.5, the cuRAND
library\footnote{\url{http://docs.nvidia.com/cuda/curand/}, v7.5}
for generating geometric random deviates and the
Thrust library\footnote{\url{https://developer.nvidia.com/thrust}, v1.7.0}
for computing prefix sums. Thus, most of the work
can actually be delegated to libraries tuned by the vendor.
Unfortunately, the repair step, albeit requiring
only sublinear work, is difficult to do on the GPU. Therefore it is
partially delegated to the CPU. There are various ways to accomplish this but
the key point is to do it in a way such that the sample does not need to be
transferred to the CPU.  Our solution first uses a parallel
GPU pass over the sample to count the number $n'$ of prefix sum values
$<N$ (see Section~\ref{s:preliminaries}). Only the single value $n'$ needs to be
transferred to the CPU.  The CPU then uses Algorithm~R to generate $n'-n$
samples from the range $0..n'-1$. These samples are transferred to the GPU which marks
the appropriate positions in the sample array for removal.
Finally, the sample array is compacted using the Thrust function \texttt{copy\_if}.

\section{Experiments}\label{s:experiments}

\begin{figure}
  \centering
  \includegraphics[width=\textwidth]{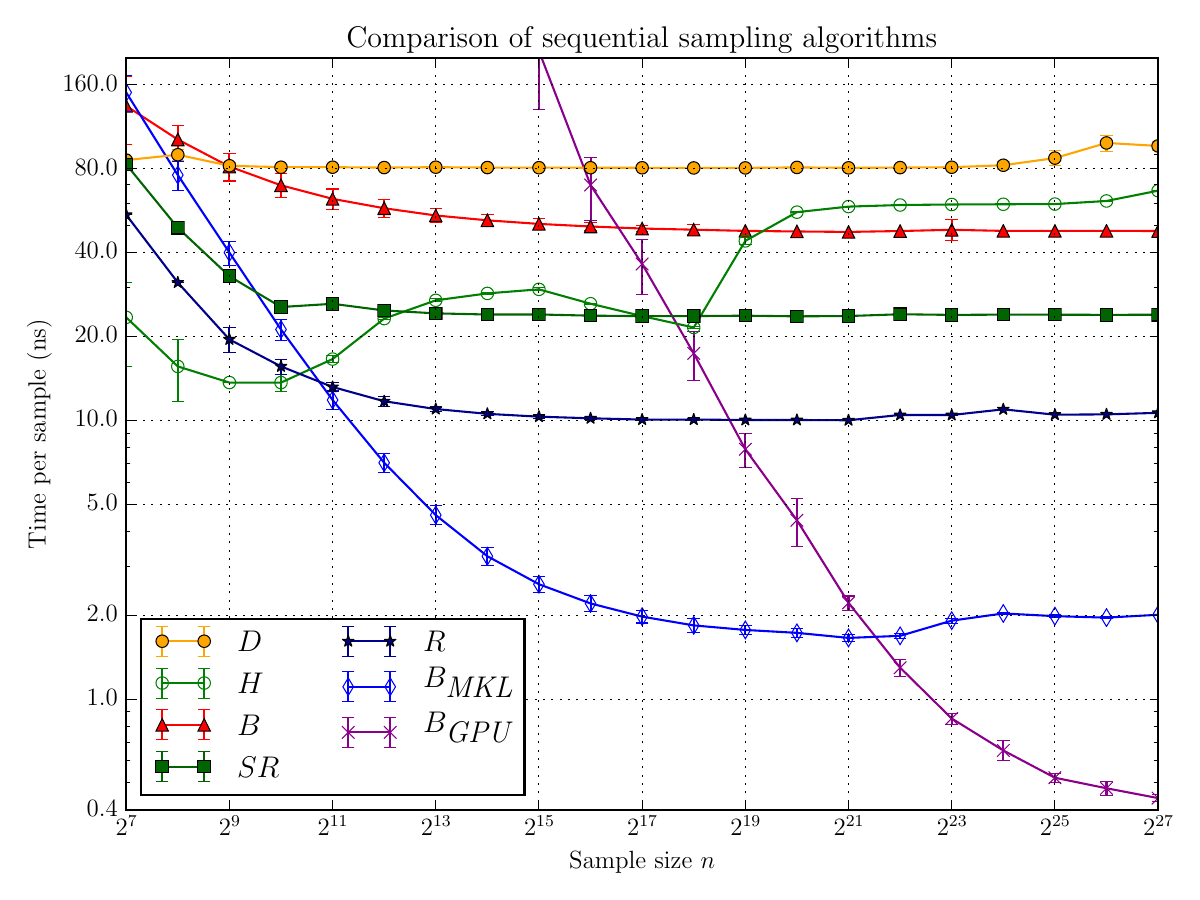}
  \caption{\label{fig:sequentialSpeed}Running time per sample for the sequential
    algorithms H, D, \AR, and \ABern. The bars show the standard deviation.
    The number of repetitions for each algorithm is $2^{30}/n$.
    For Algorithm~\AR, we use $n_0=2^{10}$.\ \ \textit{SR} is Algorithm~\AR with sorted
    output.  $B_\textit{MKL}$ and $B_\textit{GPU}$ are non-portable vectorized
    implementations of Algorithm~\ABern for CPUs using Intel's Math Kernel
    Library (MKL) and NVIDIA GPUs using CUDA, respectively.}
\end{figure}

Figure~\ref{fig:sequentialSpeed} compares the performance of the
sequential Algorithms D, H, R, and \ABern.
Refer to Table~\ref{tab:algorithms} for a summary of the abbreviations.
These experiments were conducted on a single core of a dual-socket
Intel Xeon E5-2670 v3 system with 128\,GiB of DDR4-2133 memory,
running Ubuntu 14.04.  The code was compiled with GNU \texttt{g++} in
version 6.2 using optimization level \texttt{fast} and
\texttt{-march=native}.  We report results for universe size
$N=2^{50}$ and varying $n$.  The number of repetitions
was $2^{30}/n$ to achieve equal work for every $n$.  We see that
Algorithm~H is very fast for small $n$, but its performance degrades
as $n$ grows and the hash table exceeds the cache size.  Our new
Algorithm~\AR is similarly fast for small $n$, but the time per sample
remains constant as $n$ grows.  Thus, it is up to $5$ times faster
than Algorithm~H for very large $n$.  The performance of Algorithm~D
is also independent of~$n$, but worse than Algorithm~\AR by a factor
of $7$. A variant of Algorithm~R (SR) that generates samples online
and in sorted order is still 3.4 times faster than
Algorithm~D.  The portable implementation of Algorithm~B (labeled $B$
in Figure~\ref{fig:sequentialSpeed}) is faster than
Algorithm~D but cannot compete with Algorithm~R.

This picture changes when looking at tuned architecture specific
implementations of Algorithm~B. The CPU version (label $B_\textit{MKL}$) is
up to 6 times faster than Algorithm~R for large $n$. For very large
$n$, the GPU version, $B_\textit{GPU}$, running on an NVIDIA GeForce GTX 980 Ti graphic
card, is yet $4.5$ times faster.  However, it should be noted
that a single core of a Xeon E5-2670 v3
uses much less power than an entire GTX 980 Ti -- the entire Xeon PE
with 12 cores uses about half the power of the graphics card.

\begin{figure}
  \centering
  \includegraphics[width=0.9\textwidth]{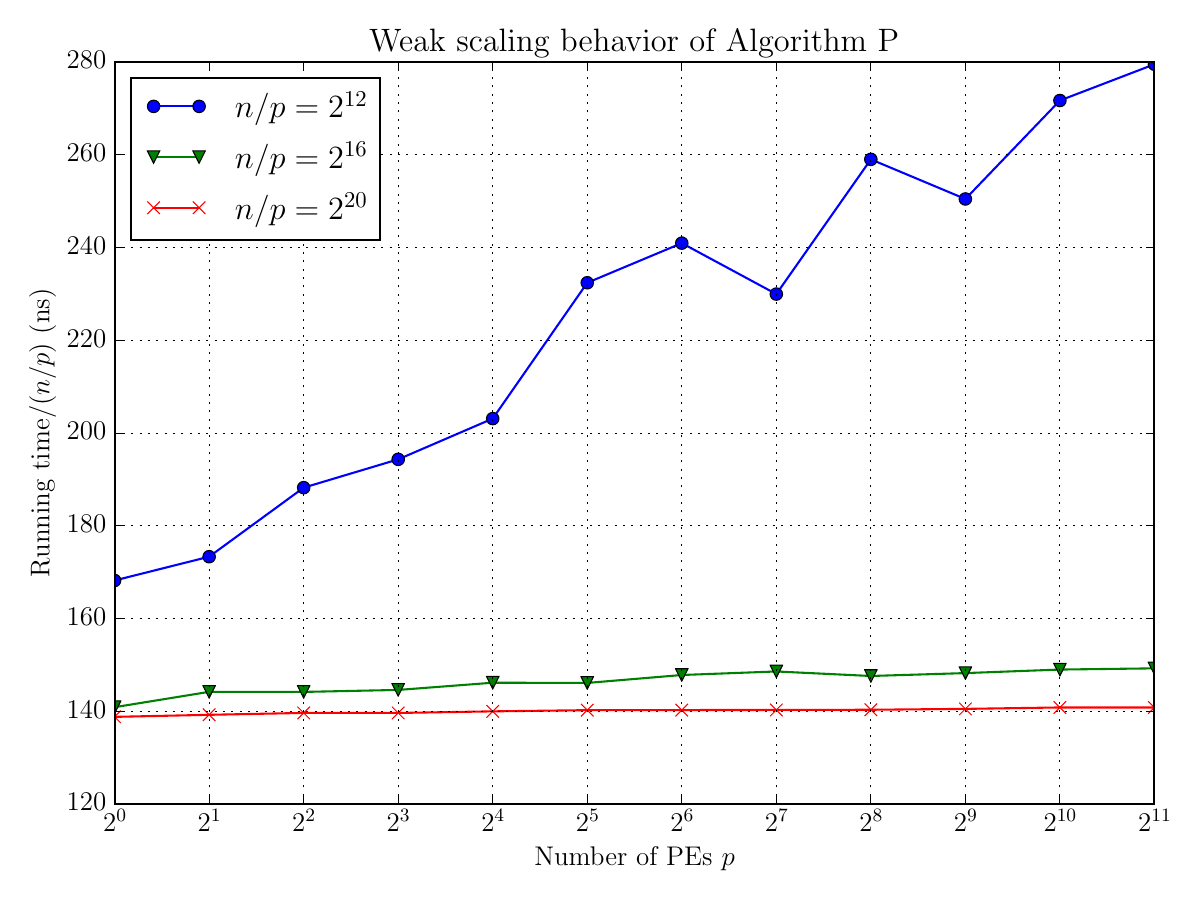}
  \caption{\label{fig:speedup}Running time for generating $n$ samples on $p$
    PEs for different values of $n/p$ using Algorithm~P with
    Algorithm~\AR as local sampler, using $n_0=2^{8}.$
    The number of repetitions for each value of $n/p$ is $2^{30} \cdot p / n$.}
\end{figure}

Our experiments clearly confirm our expectation that offloading
sampling to the GPU for further processing on the CPU is not
worthwhile, as the time for transferring the samples from GPU to CPU memory
(not pictured in Figure~\ref{fig:sequentialSpeed})
dwarfs the time to take the sample -- including transfer, a single core of a modern
CPU can generate the samples equally fast.  However, it also shows
that fast sampling is possible for GPU applications, i.e.~if the
samples are required on the GPU for further processing.

It is also worth looking at the individual components of the running
time of the GPU implementation (we consider the case of
$n=2^{27}$ as an example).  The CPU portion and data transfer account for $13.3\,\%$ of the total running time, while
$86.7\,\%$ are spent on computation on the GPU. This time, in turn, is split up as follows.
Generating geometrically distributed random numbers using
cuRAND takes $25.0\,\%$ of the computation time, and calculating a prefix sum over the
elements using the Thrust function \texttt{inclusive\_scan} takes another $36.7\,\%$.
Counting the number of elements $<N$ with \texttt{count\_if} takes $6.9\,\%$.
While the time for marking the elements selected by the CPU is negligible at $0.2\,\%$,
the following compaction with the Thrust function \texttt{copy\_if} takes
another $31.0\,\%$.

\paragraph*{Algorithm~P}
Figure~\ref{fig:speedup} shows a so-called weak scaling experiment on JUQUEEN, a
distributed memory machine. It shows the running time of Algorithm~P when
keeping local input size $n/p$ constant, measured for different values of this
ratio.  JUQUEEN is an IBM Blue Gene/Q machine, demonstrating the portability of
our code.  We used the maximum number of 16 cores per node for these
experiments.  Performance per core is an order of magnitude lower than on the
Intel~CPU used for our sequential experiments.  A factor of four is more typical
for other applications considering the lower clock frequency, older technology,
and lower number of transistors used.  The remaining factor of $2$--$3$ is mostly
due to the fact that our random number generator, a SIMD-oriented Mersenne
Twister, contains optimizations to make use of the SSE2 units of Intel~CPUs.
However, it does not have similar optimizations for the QPX instructions of Blue
Gene/Q, thus reverting to scalar code.  This is compounded further by the lack
of autovectorization for Blue Gene/Q in gcc.

On the positive side, we see that the code scales almost perfectly
for sufficiently large values of $n/p$.
For the smallest tested value of $n/p$, $4096$, we see a
linear increase in running time with an exponential increase in
$p$. This is consistent with the asymptotic running time of $n/p+\log
p$.

\section{Conclusions}\label{s:conclusion}

We find it surprising that the seemingly trivial problem of random
sampling requires such a diverse set of algorithmic techniques.
Moreover, the features of modern computer architectures entail that no
single approach is universally best. When $n$ is very small,
Algorithm~H is both simple and efficient, but for larger $n$ it
becomes cache-inefficient. This problem can be overcome by using it for the
base case of Algorithm~R. A slight generalization of Algorithm~R
allows for parallelization (Algorithm~P). Since this requires no or almost no
communication, it is suitable for many parallel models of computation, such as shared
memory, distributed memory, or cloud computing. Only some details like load
balancing and adaptation to nonuniform data distribution require
communication.

On the other hand, we see no reason to continue using Algorithm~S.
It is fast (only) if~$N/n$ is a small constant,
but we doubt that it can ever outperform Algorithm~R, which needs at
most half the number of uniform deviates.  In particular, for small $N/n$ we
could use a variant of Algorithm~H for the base case that uses the key
directly to index the table of sampled elements. This avoids the
need for handling collisions between samples.

The main point in favor of Algorithm~D is that it generates the
samples in sorted order and works in an online fashion, i.e., the expected
time between generating samples is constant. With the iterative
version of Algorithm~R described in Section~\ref{ss:sorted} we can
achieve the same effect, but with a more flexible trade-off between
maximum latency between samples and the average cost per sample.  From
that perspective, our divide-and-conquer technique is a generalization
of Algorithm~D that allows faster processing and parallelization.
Actual real time guarantees of deterministic constant time between
subsequent samples seem to be an open problem and neither Algorithm~R
nor~D can offer such guarantees.

Algorithm~B is useful because it allows for vectorization. Hence, on
architectures with fast arithmetics, a tuned version of Algorithm~B
can outperform Algorithm~R. However, this comes at the cost of reduced
portability and that samples cannot be generated in an online
fashion -- it is only after the repair step that we know which samples
survive.

To illustrate the usefulness of fast sampling algorithms, we mention a few
applications. Our algorithms are most useful in settings where fast random access (by index) to the elements is possible, e.g. when dealing with fixed-size records or in main-memory based databases such as SAP HANA
\citep{sikka2013sap}, SAP HANA Vora
\citep{goel2015towards}, or EXASOL (\url{http://www.exasol.com/}).
Generating a random graph in the $G(n,m)$ and $G(n,p)$
model of \cite{ER59rand} is equivalent to sampling from the $n(n-1)/2$
possible edges.
Sampling is performed without replacement for $G(n,m)$ and Bernoulli
sampling is used for $G(n,p)$.
In our library for generating massive graphs \citep{FLS19GraphGen,Lamm17} we successfully use the sampling
algorithms described here.
In the same work, we also use sampling with replacement to generate
point sets for several families of random geometric graphs (proximity based 2D, 3D, Delaunay 2D, 3D, and hyperbolic).
Experiments indicate good performance for up to $2^{18}$ PEs.
Sample sort \citep{BleEtAl91,ABSS15} is a successful
example of a parallel sorting algorithm that splits its input based
on a random sample. With Algorithm~P this is now possible with very low
overhead and without resorting to simplified sampling models, which often
complicate the analysis and reduce sampling quality. This is also an example where scalability matters, i.e., where a bound of $\Oh{n/p+\log p}$ is much better than $\Oh{n/p+p}$.

\subsection*{Acknowledgments}
The authors gratefully acknowledge the Gauss Centre for Supercomputing (GCS) for
providing computing time through the John von Neumann Institute for Computing
(NIC) on the GCS share of the supercomputer JUQUEEN~\citep{juqueen} at J\"ulich
Supercomputing Centre (JSC). GCS is the alliance of the three national
supercomputing centres HLRS (Universit\"at Stuttgart), JSC (Forschungszentrum
J\"ulich), and LRZ (Bayerische Akademie der Wissenschaften), funded by the German
Federal Ministry of Education and Research (BMBF) and the German State
Ministries for Research of Baden-W\"urttemberg (MWK), Bayern (StMWFK) and
Nordrhein-Westfalen (MIWF).

\bibliographystyle{apalike}
\bibliography{diss}

\end{document}